\newcommand{\naturals}{\mathbb{N}}
\newcommand{\real}{\mathbb{R}}
\newcommand{\map}[3]{#1:#2 \rightarrow #3}
\newcommand{\longthmtitle}[1]{\mbox{}{\textit{(#1):}}}
\newcommand{\setdefb}[2]{\big\{#1 \; | \; #2\big\}}
\newcommand{\setdefB}[2]{\Big\{#1 \; | \; #2\Big\}}
\newcommand*{\SetSuchThat}[1][]{} % reserve the name
\newcommand*{\MvertSets}{%
    \renewcommand*\SetSuchThat[1][]{%
        \mathclose{}%
        \nonscript\;##1\vert\penalty\relpenalty\nonscript\;%
        \mathopen{}%
    }%
}
\DeclarePairedDelimiterX \Set [2] {\lbrace}{\rbrace}
    {\,#1\SetSuchThat[\delimsize]#2\,}
\newcommand{\Cc}{\mathcal{C}}
\newcommand{\Kc}{\mathcal{K}}
\newcommand{\defeq}{\triangleq}
\newtheorem{theorem}{Theorem}
\newtheorem{assumption}[theorem]{Assumption}
\newtheorem{remark}[theorem]{Remark}
\newtheorem{proposition}[theorem]{Proposition}  
\theoremstyle{definition}
\newcommand{\nom}{{\operatorname{nom}}}
\newcommand{\on}{{\operatorname{on}}}
\newcommand{\off}{{\operatorname{off}}}
\newcommand{\predict}{{\operatorname{p}}}
\newcommand{\Lie}{\mathcal{L}}
\DeclareMathOperator*{\argmin}{argmin}
\title{\LARGE \bf Intermittent Safety Filters for Event-Triggered Safety Maneuvers with Application to Satellite Orbit Transfers}
\author{Pio Ong and  Aaron D. Ames% <-this % stops a space
\thanks{This research is supported in part by the National Science Foundation (CPS Award \#1932091) and Raytheon.}
  \thanks{Pio Ong and Aaron D. Ames are with the Department of Mechanical and Civil Engineering, California Institute of Technology, Pasadena, CA 91125, USA. {\tt\small \{pioong,ames\}@caltech.edu}}%
}
\begin{document}

\maketitle
\begin{abstract}
    In balancing safety with the nominal control objectives, e.g., stabilization, it is desirable to reduce the time period when safety filters are in effect. Inspired by traditional spacecraft maneuvers, and with the ultimate goal of reducing the duration when safety is of concern, this paper proposes an event-triggered control framework with switching state-based triggers. Our first trigger in the scheme monitors safety constraints encoded by barrier functions, and thereby ensures safety without the need to alter the nominal controller---and when the boundary of the safety constraint is approached, the controller drives the system to the region where control actions are not needed. The second trigger condition determines if the safety constraint has improved enough for the success of the first trigger. 
    We begin by motivating this framework for impulsive control systems, e.g., a satellite orbiting an asteroid. We then expand the approach to  more general nonlinear system through the use of safety filtered controllers. Simulation results demonstrating satellite orbital maneuvers illustrate the utility of the proposed event-triggered framework.
\end{abstract}
\section{Introduction}

The idea of filtering a nominal control action to satisfy safety constraints---that is \emph{safety filtering} \cite{TG-AS-JR-LC-EF-ADA:18,KH-MM-MA-SC-EF:21}---has proven to be a powerful tool in controller synthesis. This technique facilitates the controller design process as we can decouple different control objectives, e.g., stability and safety.  An unfortunate consequence of applying safety filters is the possibility that the deviation from the nominal controller will affect the success of the nominal objective, e.g., stability. Addressing this potential conflict involves bounding the deviation from, and the effect on, the nominal controller---a difficult task in nonlinear system. 

This paper presents a new approach to safety filtering that is \emph{intermittent} in nature, with the result being event-triggered safety maneuvers.   
To this end, we take inspiration from safety maintenance of a satellite via orbit transfers \cite{AHDR-CD-JRF:12}. Satellites spend most of their time during a given mission with the nominal objectives, and only apply corrective maneuvers intermittently to ensure safety, e.g., in response to unmodelled environmental dynamics. Because there are periods where there is no deviation from the nominal controller, guarantees can be made if enough time is spent in these safe operating regions. The goal of this paper is to develop an event-triggered framework that imitates the behavior of satellites that has proven useful in practice. In particular, the switching between safety objective in an ``on demand'' fashion, and the application of safety filters in an intermittent fashion.
The end result will be the introduction of \textit{event-triggered intermittent safety filtering}. 
The hope is that this will lay the necessary groundwork for an alternative approach 
enforcing safety guarantees while minimally modifying the nominal performance objectives.  

\begin{figure}[t]
\label{fig:fit}
\centering
\includegraphics[width=\linewidth,height=\textheight,keepaspectratio]{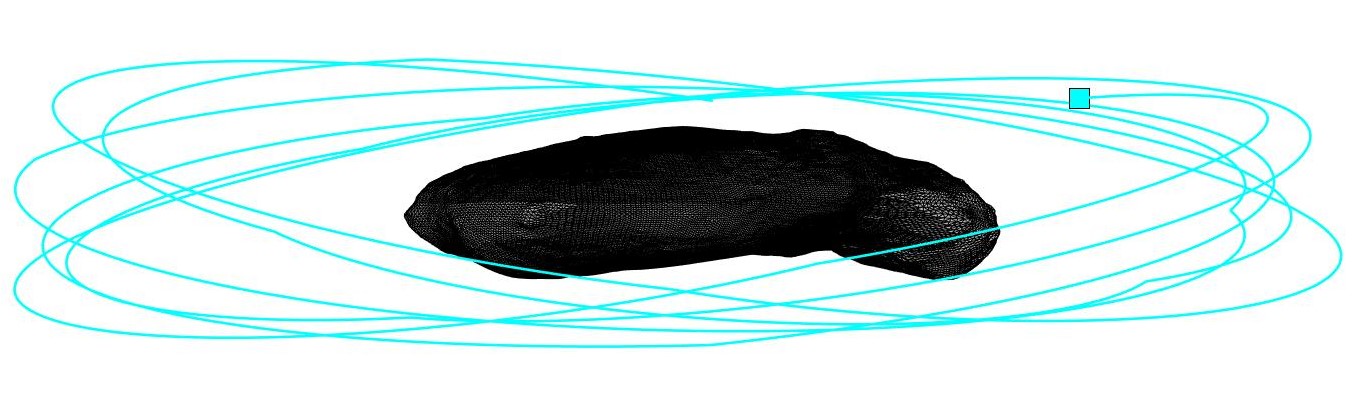}
\caption{Satellite safe orbit trajectory around 25143 Itokawa utilizing event-triggered safety maneuvers.}
\end{figure}

\subsection*{Literature Review} 
 
Our paper relies on two main bodies of literature: safety-critical control via certificates, e.g., control barrier functions, and event-triggered control. 

Safety-critical controls aims to provide a framework for formally guaranteeing the safety of nonlinear control systems---with safety typically framed as avoiding undesirable states, or equivalently always staying in a set of desirable states. Barrier certificate~\cite{SP-AJ:04} describes a safety of a set utilizing a scalar function, wherein it is simpler to analyze corresponding violations. In this context, barrier certificates are related to ideas underlying nonovershooting control that restricts outputs~\cite{MK-MB:07} or Lyapunov function evolving within a specified bound~\cite{PO-AB-TH-LK-PS:06}.  Nevertheless, the attractiveness of barrier certificate is that it isolates safety problem from the design of the controller for other objectives, thereby transforming the problem into a more tractable form. 

Control barrier functions as first introduced~\cite{PW-FA:07} mirror control Lyapunov functions by studying barrier certificates associated with control systems. Yet the result was overly conservative, and as a result the modern form of control barrier functions (CBFs) where introduced in~\cite{ADA-JWG-PT:14,XX-PT-JWG-ADA:15,ADA-XX-JWG-PT:17,ADA-SC-ME-GN-KS-PT:19}; these are necessary and sufficient for forward set invariance thereby generalizing Nagumo theorem~\cite{FB-SM:07} to control systems. The non-conservative nature of CBFs lead to the idea of a safety filter \cite{TG-AS-JR-LC-EF-ADA:18}---and optimization-based controller that minimally modifies a nominal controller to ensure safety. This paper seeks to expand the notion of a safety filter by shortening the filtering period, thereby allowing the nominal controller more freedom. The idea is made possible by event-trigger control.

Event-triggered control~\cite{PT:07,WPMHH-KHJ-PT:12,LH-CF-HO-AS-EF-JPR-SIN:17} synthesizes property preserving discrete implementations of continuous controllers. The seminal work~\cite{PT:07} studies sample-and-hold system and proposes an aperiodic sampling scheme that is based on system states rather than fixed time. Event-triggered control shows great improvement in the average sampling time. Event-triggered control implemented in the context of sample-and-hold can be formalized as an impulsive system~\cite{MCFD-WPMH:12}, and in general as a hybrid system \cite{JC-PC-RGS:20,JC-PC-RGS:17}. This hints at the possibility of transferring the trigger ideas and applying them to different type of hybrid systems. 

In this paper, we study event-triggered control of impulsive systems in the context of safety; this can, for example, be used to describe satellite systems. There are works on event-triggered impulsive control, but they focus on stabilization~\cite{BL-DNL-CXD:14, BL-DH-ZS:18,XL-DP-JC:20, KZ-BG:21} which is not applicable to safety because of difference in two objectives, cf. ~\cite{AJT-PO-JC-AA:21-csl}. Additionally, there are works that deal with safety~\cite{GY-CB-RT:19,AJT-PO-JC-AA:21-csl,WX-CB-CGC:21} but not for impulsive control systems.   Finally, this paper also studies event-triggered control in the context of intermittent control. Our previous work~\cite{PO-GB-ADA:22-cdc} also considers intermittent control, but in the context of sample-and-hold whereas this paper considers the frequency of safety filtering.

\subsection*{Statement of Contribution}

This paper investigates the idea of lengthening the time period during which a safety-critical controller does not need to actively spend control effort in order to satisfy safety constraints.  To this end, we use event-triggered control with barrier functions to monitor safety and opportunistically determine when to start applying control for the purpose of satisfying safety objectives. The first contribution of this paper is the formalization of an event-triggered implementation of safety-critical controllers for impulsive control systems. We provide a trigger design that guarantees safety, along with the sufficient conditions for establishing the minimum inter-event time, which rules out the possibility of Zeno behavior. The second contribution is another trigger design for impulsive system that is built on the first trigger design, with the goal of imitating the key elements of a satellite maneuver. To accomplish this, we add an additional trigger condition that monitors the increase of the inter-event time of the subsequent application of control. 

To demonstrate the effectiveness of the two trigger designs presented in this paper, we consider the problem of maintaining a satellite within an orbital radius range---the \emph{satellite orbit safety problem}. For the second trigger design, we demonstrate its similarity to traditional orbit transfers. This motivates the third contribution of the paper: an event-triggered intermittent safety filter framework, which is developed as an extension of our first trigger design in the context intermittent control systems. This framework takes inspiration from satellite maneuvers, i.e., our design is based on (1) maneuvering to safer states in order to avoid the need to filter the nominal controller, and (2) monitoring the barrier function to determine when safety is critical. Our framework actively allows a period for where nominal controller may be applied without a safety filter, and is thus able to make progress towards nominal objectives. 

\vspace{0.1cm}
\subsubsection*{Notation}
We denote with $\naturals$ and $\real$ the set of all natural and real numbers respectively. For a vector $x\in\real$, $\|x\|$ is its Euclidean norm. A  function $\map{\alpha}{\real}{\real}$ is of class-$\Kc$ if $\alpha(0)=0$, and $\alpha$ is strictly increasing. Let $t\mapsto x(t)$ be a solution to a dynamical system for an initial condition $x_0$, a set $\Cc$ is forward invariant and safe if $x(t)\in\Cc$ for all time whenever $x_0\in\Cc$.

\section{Preliminaries}
We begin by providing some background on the safety concept and the common practice of using safety filter, in order to motivate the problem we consider in this paper. In addition, we provide the background on event-triggered control, which we believe to be the key tool for solving our problem. Here, we consider the nonlinear system
\begin{equation} \label{sys:nonlinear}
    \dot x = f(x,u)+ d
\end{equation}
where $x\in\real^n$ is the state, $u\in \real^m$ is the control input, and $d\in\real^n$ is the disturbance to the system. We assume throughout the paper that the disturbance is bounded, i.e., $\|d\|\leq \bar d$.

\subsection{Safety Formulations and Safety Filters}
For safety problems, we are interested in ensuring that the states along system trajectories are not undesirable states. To address these problems, one approach is to define the safe set~$\Cc$, consisting of only ``safe" states via a \textit{barrier function}~$\map{h}{\real^n}{\real}$ such that:
\begin{subequations}\label{eq:safeset}
\begin{align}
\Cc &= \setdefb{x \in \real^n}{h(x) \geq 0}, \\
\partial \Cc &= \setdefb{x \in \real^n}{h(x) = 0}, \\
\text{Int}(\Cc) &= \setdefb{x \in \real^n}{h(x) > 0}.
\end{align}
\end{subequations} 
With the barrier function, the safety problems involve finding the input signal $t\mapsto u(t)$ that ensures $h$ remains positive at all time so that the safe set $\Cc$ forward invariant, and the trajectories do not reach the undesirable states. To this end, we can use a state-feedback control $u=k(x)$ with a controller $\map{k}{\real^n}{\real^m}$ satisfying the following \textit{barrier condition}:
\begin{equation}\label{eq:barr_cond}
    \underbrace{\left. \frac{\partial h}{\partial x} \right|_{x} f(x,k(x))}_{\defeq \Lie_fh(x,k(x))} - \left\|\left. \frac{\partial h}{\partial x} \right|_{x}\right\| \bar d\geq -\alpha(h(x)).
\end{equation}
for some class-$\Kc$ function $\alpha$. The condition above is a conservative way to keep the function $h$ positive. Not only does it require $h$ to not decrease whenever it is zero, but it also requires that $h$ does not decrease too quickly as it gets closer to zero. Nevertheless, it provides many benefits in many applications, one of which is the ability to monitor safety. We will discuss this point later in the paper.

More often than not, safety is not the only objective for the control system. In such case, a nominal controller $\map{k_\nom}{\real^n}{\real^m}$ is first designed to meet other objectives. Then a safety filter is used to enforce the barrier condition by defining the controller using an optimization:
\begin{align}
k(x) = &\argmin_{u\in\real^m} \|u-k_\nom(x)\|^2\label{eq:filter}\\
&\quad\text{s.t.}~~ \Lie_fh(x,u)- \left\|\left. \frac{\partial h}{\partial x} \right|_{x}\right\| \bar d\geq -\alpha(h(x)).\nonumber
\end{align}
The idea is that we prioritize safety above other objectives, so we adjust the nominal controller so that barrier condition is always maintained. The optimization assures that the resulting controller deviates from the nominal controller as little as possible (minimal Euclidean distance in this case). 

Nevertheless, the deviation from the nominal controller can pose an issue in using safety filters because the filtered controller may no longer satisfy the original objective. This motivates the problem we seek to address.

Consider the state-feedback control $u=k(x)$ of the nonlinear system~\eqref{sys:nonlinear}. We can separate the periods when the constraint of the filter~\eqref{eq:filter} is active and inactive and turn it into an intermittent nonlinear system as:
\begin{equation}\label{sys:intermittent}
    \dot x =\begin{cases} f(x,k(x))+d, &t\in[t^\on_i,t^\off_i)\\
    f(x,k_\nom(x))+d, &t\in[t^\off_i,t^\on_{i+1}).
    \end{cases}
\end{equation}
Here, the time at which the the filter is on and off is automatically determined by whether or not the constraint is active. Notice that it is possible for the off period to be nonexistent if the constraint never becomes inactive. In this work, we identify a method to assure the existence of the off period and we use event-triggered control to lengthen the off period for as long as possible.

% \bigskip
% \bigskip
% \bigskip
% {\color{red}
% we want to reduce safety filter, we want to make safety filter more intermittent so that hopefully we can decouple nominal objective from safety??
% \begin{problem}
% state the problem we want to solve
% \end{problem}
% }

\subsection{Event-Triggered Control}
Under the event-triggered control framework, the control is applied according to a \textit{trigger condition}, instead of having a scheduled time for control. Such condition is usually based on the states of the system, so the control is applied only when necessary. Trigger designs are often written as:
\begin{equation}\label{eq:trigger}
t_{j+1} = \setdefb{t\geq t_j}{\Xi(x(t))\leq 0}
\end{equation}
with a trigger condition $\map{\Xi}{\real^n}{\real}$. The trigger enforces $\Xi(x(t))> 0$ for the duration $[t_j,t_{j+1})$ because the control is applied otherwise. 
Event-triggered control is often used as a tool to reduce the frequency of control adjustments because they are done only when necessary. Event-triggered control are often studied in the context of sample-and-hold.

\subsection{Sample-and-hold systems}
Such systems appear naturally as control systems are implemented on a digital platform. Here, we model sample-and-hold control systems as impulsive control systems. Consider a nonlinear system:
$$
\dot y = F(y,v)
$$
where $y\in \real^{n_y}$ is the system states and $v\in \real^{m_v}$ is the control input for the system dynamics $\map{F}{\real^{n_y}\times \real^{m_v}}{\real^{n_y}}$. Sample-and-hold systems samples the control input signal at different time instances $t_j$, then holds the value constant, $v(t)=v(t_j)$, for time $t\in[t_j,t_{j+1})$. We can rewrite the system with the control input $v$ as part of the state:
\begin{equation*}
\begin{bmatrix} \dot y \\ \dot v \end{bmatrix} = \begin{bmatrix} f(y,v)\\ 0 \end{bmatrix},~
\begin{bmatrix} y \\ v \end{bmatrix}^+ = \begin{bmatrix}  y \\ v + \Delta v \end{bmatrix}.
\end{equation*}
Defining the state $x = \begin{bmatrix} y \\  v \end{bmatrix}$ and the control input $u=\Delta v = v(t_{j+1})-v(t_j)$, we have transformed the sample-and-hold control system as an impulsive control system~\eqref{sys:flow}-\eqref{sys:jump}. While there are many impulsive systems in the real world, like the satellite system we consider in this paper, impulsive systems being a generalization of sample-and-hold control systems makes them even more interesting to study, especially under the context of event-triggered control.

\section{Event-Triggered Impulsive Safety}
\label{sec:ET-impulse}
We begin the exposition with the consideration of impulsive control systems. Such systems can be modelled as hybrid systems with flow dynamics $\map{F}{\real^n}{\real^n}$ and a jump map $\map{G}{\real^n\times\real^m}{\real^n}$ as:
\begin{subequations}\label{sys:impulse}
    \begin{align}
        \dot x &= F(x) + d, \label{sys:flow}\\%~x\in\Fc, 
        x^+ &= G(x,v) \label{sys:jump}%,~x\in\Gc \defeq \real^n \setminus \Fc,
    \end{align}
\end{subequations}
where $x\in\real^n$, $v\in\real^m$ and $d\in\real^n$ are the system state, the control input, and the flow disturbance. One example for this type of system is the satellite system, which we will give more details later.

We are interested in safety problems for the impulsive systems~\eqref{sys:impulse}. The main difference between safety problems and stabilzation problems is the fact that we must satisfy safety criteria at all points along the trajectory. Notice however that the control input does not appear in the flow~\eqref{sys:flow}. If we consider the flow dynamics~$F$ to be the result of closing the loop for nonlinear feedback system $F(x) = f(x,k_\nom(x))$ with the nominal controller, then the flow is the duration for which we cannot use safety filters because we cannot make any adjustment. At the same time, our problem of avoiding safety filters get simplified in impulsive control systems because the effect of control inputs are instantaneous. In this case, we are effectively ignoring the filtering duration, and we can focus on understanding what condition we may want as the filtering duration ends.

\subsection{Safeguarding Impulsive Controller}
For a safeguarding impulsive controller~$\map{K}{\real^n}{\real^m}$, we consider two following objectives. First, we require that each jump results in a state that remains in the safe set:
\begin{equation}\label{eq:jump_positive}
    h(G(x,K(x))) \geq 0.
\end{equation}
This requirement is straightforward and is relatively easier to meet. In addition, we need the states along the trajectory during the flow to meet the barrier condition: 
\begin{equation}\label{eq:flow_barrier}
    \Lie_Fh(x) - \left\|\left. \frac{\partial h}{\partial x} \right|_{x}\right\| \bar d \geq -\alpha\big(h(x)\big),
\end{equation}
which is more problematic to satisfy. We have to rely on our control input at each jump to guarantee safety of the ensuing trajectory during the flow, up until the next jump occurrence. To address this second requirement, our approach is to design the controller~$K$ so that:
\begin{multline}\label{eq:control_jump_cond}
\Lie_Fh(G(x,K(x))) - \left\|\left. \frac{\partial h}{\partial x} \right|_{G(x,K(x))}\right\| \bar d \\ \geq -\alpha\big(h(G(x,K(x)))\big) + c.
\end{multline}
with some positive constant $c$. The idea is to use the constant $c$ to provide a buffer so that some time has to elapse (due to continuity) after each jump, before the barrier condition~\eqref{eq:flow_barrier} gets violated.

\subsection{Event-Triggered Safeguarding Impulsive Control}
We aim at reducing the frequency of the controls, and a reasonable approach is to employ event-triggered control to prescribe when to apply controls. To this end, our main trigger condition is based on monitoring the barrier condition~\eqref{eq:flow_barrier}. Denoting $t_i$ as the last instance when the jump occur, we determine the jump time iteratively with:
\begin{subequations}
    \label{trigger:greedy}
\begin{align}
    t_{i+1} &= \min\setdefB{t\geq t_{i}}{\Xi(x(t))\leq 0},\\
    \Xi(x) &= \Lie_Fh(x) - \left\|\left. \frac{\partial h}{\partial x} \right|_{x}\right\| \bar d  +\alpha\big(h(x)\big).\label{eq:greedy_cond}
\end{align}
\end{subequations}
The trigger above makes sure that controls only when necessary, i.e., when the barrier condition is violated. This strategy is a greedy approach to maximizing the times between jumps and reducing how often the jumps occur. 

The main concern when relying on event-triggered control is the possibility of Zeno behavior. That is, because the elapsed time between $t_i$  and $t_{i+1}$ are not uniform for all $i\in\naturals$, it become possible that the sequence $\{t_i\}_{i\in\naturals}$ can converge to a constant value rather than infinity. This would mean there can infinite numbers of jumps in finite time, making it impractical to implement in practice.

The common approach of ruling out Zeno behavior is to establish a minimum inter-event time (MIET), i.e., a common positive lower bound $\tau \leq t_{i+1}-t_i$ for all $i\in\naturals$. The task of establishing a MIET is often difficult, especially when the system is subjected to an unknown disturbance $d$. In this paper, we endow our controller with condition~\eqref{eq:control_jump_cond}. This means that whenever a jump occurs, the trigger condition~$\Xi$ has a value greater than $c$. Thus, if the rate at which trigger condition can decrease is bounded by a constant, a MIET can be derived. A set of assumptions we can make to achieve this is given in the following result.

\begin{proposition}\longthmtitle{Event-Triggered Safety for Impulsive Control Systems}
\label{prop:ET-impulse}
Consider the impulsive control system~\eqref{sys:impulse} with jump time $\{t_i\}_{i\in\naturals}$ determined iteratively by the trigger design~\eqref{trigger:greedy} and the corresponding control satisfying~\eqref{eq:control_jump_cond}. Assume the followings:
\begin{enumerate}[(i)]
    \item the flow dynamics~$F$ is bounded on $\Cc$;
    \item the trigger condition~$\Xi$ is Lipschitz and continuously differentiable on $\Cc$.
\end{enumerate}
Then there exists a MIET~$\tau \leq t_{i+1}-t_i$. As a consequence, $x(t)\in\Cc$ for all time if $x_0\in\Cc$. That is, the set $\Cc$ is safe. 
\end{proposition}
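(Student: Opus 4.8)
The plan is to prove the two claims in sequence: first the existence of a minimum inter-event time (MIET), which is the crux of the argument, and then safety, which follows once Zeno behavior is excluded. The guiding observation is that the controller condition~\eqref{eq:control_jump_cond} forces the trigger function to satisfy $\Xi(x(t_i^+)) \geq c$ immediately after every jump, while the trigger~\eqref{trigger:greedy} fires only once $\Xi$ returns to $0$. Hence I would estimate how long $\Xi$ can take to decay from $c$ down to $0$ along the flow, and use that elapsed time as the MIET.

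For the MIET, I would bound the time derivative of $\Xi$ along the flow $\dot x = F(x) + d$. Assumption (ii) gives that $\Xi$ is Lipschitz on $\Cc$, so its gradient is bounded, say $\|\partial\Xi/\partial x\| \leq L_\Xi$; assumption (i) bounds $F$ on $\Cc$ by some $M$, and the disturbance is bounded by $\bar d$. Thus $\dot\Xi = (\partial\Xi/\partial x)(F(x)+d)$ satisfies $\dot\Xi \geq -L$ with $L \defeq L_\Xi(M + \bar d)$. Integrating from a post-jump instant $t_i$ where $\Xi \geq c$ gives $\Xi(x(t)) \geq c - L(t-t_i)$, so the trigger condition $\Xi \leq 0$ cannot be met before $t_i + c/L$. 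This yields $\tau \defeq c/L$ as the MIET, which is positive since $c > 0$, and consequently the jump times cannot accumulate, ruling out Zeno behavior and guaranteeing that the trajectory is defined for all $t \geq 0$.

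For safety, I would observe that by construction the trigger keeps $\Xi(x(t)) > 0$ on each flow interval $[t_i, t_{i+1})$, which is exactly the barrier condition~\eqref{eq:flow_barrier}. Lower-bounding $\dot h = \Lie_F h(x) + (\partial h/\partial x)d \geq \Lie_F h(x) - \|\partial h/\partial x\|\bar d \geq -\alpha(h(x))$ along the flow and invoking the standard comparison (Nagumo-type) argument then shows that $h$ remains nonnegative throughout the flow whenever it starts nonnegative, so the trajectory stays in $\Cc$ up to $t_{i+1}$. At the jump, \eqref{eq:jump_positive} guarantees that the post-jump state again lies in $\Cc$ with $\Xi \geq c$. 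An induction over the jump intervals, combined with the MIET that ensures the intervals cover $[0,\infty)$, then gives $h(x(t)) \geq 0$ for all time, i.e., $\Cc$ is forward invariant and safe.

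The main obstacle is the mutual dependence between the two parts: the bound $\dot\Xi \geq -L$ is valid only while the trajectory remains in $\Cc$ (where assumptions (i)--(ii) apply), yet remaining in $\Cc$ is precisely what the safety argument establishes. I would resolve this circularity by running a single induction on the jump index, in which, on each interval, the barrier condition (from $\Xi \geq 0$) keeps $x \in \Cc$ via the comparison argument, and staying in $\Cc$ in turn validates the $\dot\Xi$ estimate used to lower-bound the interval length. A secondary technical point is justifying the comparison step for a merely continuous class-$\Kc$ function $\alpha$; this is handled by the usual observation that $\dot h \geq 0$ whenever $h = 0$, so $h$ cannot cross zero from above.
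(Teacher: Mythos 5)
Your proof is correct and follows essentially the same route as the paper's: a MIET $\tau = c/(L_\Xi(B+\bar d))$ obtained by integrating the worst-case decay rate of $\Xi$ from its post-jump value $c$, followed by safety from the trigger-enforced barrier condition~\eqref{eq:flow_barrier} together with the jump condition~\eqref{eq:jump_positive}. If anything, your handling of the circularity (the bounds on $F$ and $\partial\Xi/\partial x$ hold only on $\Cc$) via induction on the jump index, and your justification of the comparison step for a merely continuous class-$\Kc$ function $\alpha$, are more careful than the paper's own brief safety conclusion.
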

\begin{proof}
Let $B\geq \|F(x)\|$ denote the bound of the flow dynamics and $L_\Xi\geq \left\|\left.\frac{\partial \Xi}{\partial x}\right|_x\right\|$ denote the Lipschitz constant of the trigger condition. We can estimate the lower bound of the value of the trigger condition as follow:
\begin{align*}
\Xi(t) &= \Xi(x(t_i)) + \int_{t_i}^t\left.\frac{\partial \Xi}{\partial x}\right|_{x(t)} (F(x(t))+d) dt \\
&\geq c - \int_{t_i}^t L_\Xi (B+\bar d) dt \\
&= c - L_\Xi (B+\bar d) (t-t_i).
\end{align*}
Hence, it is only possible for $\Xi(t) \leq 0$ when $t \geq c/( L_\Xi (B+\bar d)) +t_i = \tau+t_i$. Therefore, $t_{i+1}-t_i \geq  \tau$ and the possibility of Zeno behavior is ruled out.

Without Zeno behavior, system trajectories are defined at all time. Now note that \eqref{eq:jump_positive} is by design of the controller, so $x(t_i)\in \Cc$ for all $i\in\naturals$. Then, we may conclude $x(t)\in \Cc$ because the barrier condition \eqref{eq:flow_barrier} is satisfied at all time due to the trigger~\eqref{trigger:greedy}. This concludes the proof.
\end{proof}
Proposition~\ref{prop:ET-impulse} provides an event-triggered implementation solution of an impulsive safeguarding controller. We have made two regularity assumptions in order to sufficiently establish the MIET. Even though MIET is not required for deducing safety (unlike the case of stablization), it is important that our trigger scheme is practical. Next, we demonstrate the effectiveness of our trigger design through satellite safety problem.

\subsection{Application to Satellite Systems}
Satellites orbiting around a central body can be described by Newton's gravitation model. Particularly, denoting the position and velocity vectors of a satellite with $\vec{r}\in \real^3$ and $\vec{v}\in\real^3$, the satellite is subjected to the dynamics from the gravity field:
$$
\frac{d}{dt}\begin{bmatrix}\vec{r}\\ \vec{v}\end{bmatrix} = \begin{bmatrix}\vec{v}\\- \frac{\mu}{r^3} \vec{r}\end{bmatrix}+d
$$
where $\mu$ is the gravitational parameter of the central body, $r$ is the shorthand notation for $\|\vec{r}\|$, and $d\in\real^3$ is the disturbance to the dynamics such as the higher order gravity field not considered in the Newton's model. 

% Indeed, Newton's model is imperfect, primarily because it does not account for higher order gravity field that comes from the central body not being a perfect sphere. The inaccuracy of the model becomes more prominent for smaller central bodies, like asteroids, that do not have enough gravity to exhibit a roughly spherical shape. Nevertheless, the simpler model gives rise to system trajectories that are well-studied in orbital mechanics. That is, given any initial condition, one can find that the satellites follow conic sections. Having a solution is tremendous for nonlinear control, and we will take advantage of it when we predict how the satellites will behave. The actual trajectories, however, will not be a conic section. This is due to the presence of disturbance~$d$, which includes the higher order gravity field not considered by the Newton's model, and other natural disturbances like solar radiation pressure. 

The satellite is controlled by firing thrusters to apply a change in velocity~$\Delta \vec{v}$. This change is assumed instantaneous:
$$
\begin{bmatrix}\vec{r}\\ \vec{v}\end{bmatrix}^+ = \begin{bmatrix}\vec{r}\\ \vec{v}\end{bmatrix}+ \begin{bmatrix}0\\ \Delta \vec{v}\end{bmatrix}.
$$
In reality, the thruster firings are not impulses; instead, they last for a few seconds. However, this timescale is much smaller than the time elapsed between each firing, which is in the scale of tens to hundreds of hours. Thus, the impulsive approximation is often used in orbital mechanics for satellite maneuvers, and we adopt this model.

We simulate an application of the impulsive trigger design~\eqref{trigger:greedy} to the satellite system. The satellite is orbiting around an asteroid, 25143 Itokawa, and the disturbance~$d$ comes from the unmodelled higher order gravity field, which is due to the asteroid not being a perfect sphere.  In term of safety, we are interested in maintaining the satellite within a range of desirable orbital distance. In our example, we want the satellite to orbit in the range $1.6R\leq r \leq 2.4R$ where $R$ is the mean radius of the central body. The barrier function we will use is:
$$
h(\vec{r},\vec{v}) = (0.4R)^2-(r-2R)^2.
$$
We design a safeguarding controller based on our understanding of orbital mechanics~\cite{BMW:71} of Newton's gravity model. For the reasons of space and the background needed to explain it, we omit details and reasonings in this paper. The overall explanation is that we apply impulses to inject the satellite in an orbit (without changing planes) towards the orbital radius $r_\text{target}(r)= 2R+0.5(r-2R)$ at peri/apoapsis. We place the satellite at the true anamoly that varies linearly $-\pi$ to $-\pi/2$  (or 0 to $\pi/2$) depending on the current orbital distance. 

We simulate the satellite orbiting the asteroid for 2400 hours. Fig.~\ref{fig:greedy} shows the result of our simulation for the first 150 hours of satellite orbit. We report that the satellite remains within the specify safe range, and the barrier condition does not get violated. The trigger sporadically occurs for a total of 267 times across the 2400 hours, which is approximately 1 trigger every 9 hours. 

\begin{figure}[thpb]
\label{fig:greedy}
\centering
\includegraphics[width=\linewidth,height=\textheight,keepaspectratio]{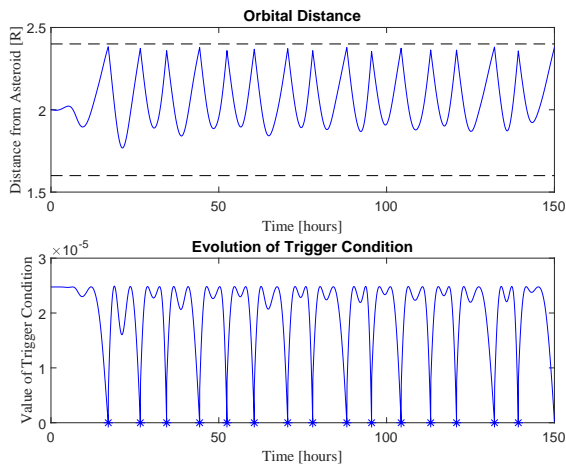}
\caption{[Top] Distance to the asteroid over 150 hours of satellite orbit. The plot shows the distance remains within the safe range at all time. [Bottom] Trigger condition over 150 hours of satellite orbit. }
\end{figure}

% Safety filters are not applicable to impulsive control systems. Notice that the control input does not appear the flow~\eqref{sys:flow}. We study safety problems in this context because it is similar to restricting any adjustment to the nominal controller at all time during the flow. We can consider the flow dynamics~$F$ to be the result of closing the loop for nonlinear feedback system $F(x) = f(x,k_\nom(x))$ with the nominal controller.

\section{Impulsive Safety Maneuvers}
\label{sec:maneuver}
In this section, we investigate the concept of safety maneuver as a way to improve the performance of event-triggered control using barrier condition. Our idea is inspired by how satellite maneuver via orbit transfers.

\subsection{Inspiration from Orbit Transfers}
Here, we discuss the common practice to guarantee safety for satellites in term of maintaining within a desirable range of orbital radius. This will serve as a reference point and a motivation to our approach. 

Typically, desired orbits are designed for the satellites so that they would be safe at all points along their orbits. Then to mitigate the effects from disturbances, satellite maneuvers are performed periodically, e.g. once a day, to reset the satellites back to the desired orbit. Simulations on the satellites from different initial positions on the desired orbit, i.e, Monte Carlo analyses, are performed in order to study the deviations from the desired orbits under disturbances and to ensure all safety criteria. From this analyses, an acceptable frequency of maneuvers can be found.

A satellite maneuver consists of two different impulses. The first impulse aims at repositioning the satellite to a point along the desired orbit. Once reached, a second impulse is applied to adjust the velocity to insert the satellite into the desired orbit. The maneuver is usually performed relatively quickly, i.e., within less than an hour for asteroid orbits. We believe there are benefits to this traditional approach, and we seek develop our version of \textit{safety maneuver} using event-triggered control.

 % and disadvantages for the traditional approach to maintaining safety. 

\subsection{Inter-event Time Improvement}
We believe the success of the strategy relies on the improvement in inter-event time at the desired orbit. For example, in our problem of keeping a satellite within a certain range, a typical desired orbit is a circular orbit at the midpoint of the range (r=2R). Fig.~\ref{fig:fit} shows the median time at different radii, for trigger to occur after the control is applied. It should be noted that the expected inter-event time towards the center of the safe set is superior to those close to the boundary. It is cost-effective to apply two control instances (one to reposition the satellite) to enjoy the longer inter-event time.

\begin{figure}[thpb]
\label{fig:fit}
\centering
\includegraphics[width=\linewidth,height=\textheight,keepaspectratio]{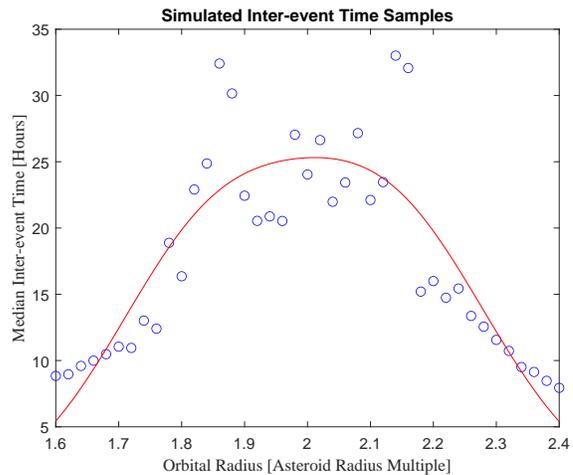}
\caption{We randomize 100 satellite positions at each orbital radius. For each position, we apply the impulsive safeguarding controller and collect the time it takes for the trigger to occur. The plot shows the median time values across the 100 samples at each radius.}
\end{figure}

The idea of increase in inter-event time can be abstracted in the context of barrier function. We define a function $\map{\tau_\predict}{\real}{\real}$ to be such that $\tau_\predict(h(x))$ is the expected inter-event time when the control is applied at the value of barrier function $h(x)$. If we have such a function, we can find its rate of change via:
$$
\dot \tau_\predict (x) = \left.\frac{d\tau_\predict}{dh}\right|_{h(x)}\Lie_Fh(x),
$$
which we can monitor along trajectory. In the satellite example, we are essentially assume $\frac{d\tau_\predict}{dh}$ is always positive, and thus, the maximum value of $h$ translates to the longest inter-event time. 

We assume the function~$\tau_p$ is obtained via data collection. Much like what we have done in Fig~\ref{fig:fit}, a likely scenario is that inter-event time data are collected for each value of $h$, and then, a curve fitting technique is performed along those data points. In the perfect scenario, the full knowledge of $\tau_p$ is preferably known as a function of state~$x$, rather than the value of barrier function~$h$. However, our approach uses the barrier function $h$ as a proxy to reduce the sampling dimension and the number of samples needed. We believe this is a good strategy because $h$ can affect inter-event time in a significant way. Referring to the trigger condition~\eqref{eq:greedy_cond}, our underlying logic is that higher $h$ will increase the value of the trigger condition $\Xi$, and thus, it would take longer time for it to reach zero. 

Indeed, our logic is not perfectly sound. The inter-event time does not depend only on the value of the barrier function. There are many variables involved such as how fast the trigger condition changes, how~$\Lie_F(x)$ changes with respect to~$h$, and how much~$h$ affects the overall value of the trigger condition via~$\alpha$. Nevertheless, the collected data will reflect that, and the function~$\tau_\predict$ will simply not be useful. However, if everything aligns, then we can obtain a function~$\tau_\predict$ that we can exploit.

\subsection{Event-triggered Impulsive Safety Maneuver}
Our safety maneuver is based on monitoring the barrier condition and the expected inter-event time. Each maneuver consists of two impulses. We note that, just like the satellite maneuvers, these two impulses do not need to be sampled from the same safeguarding controller. However, for simplicity, we will consider only one common safeguarding controller.

In order to maintain safety, both impulses rely on the trigger design~\eqref{trigger:greedy}. Let $t_i$ be the last control application, the time of first impulse~$t_{i+1}$ is determined solely according to the trigger design~\eqref{trigger:greedy}. On the other hand, the second impulse is designed to be less myopic. The trigger will not wait until the violation of safety to maximize its immediate inter-event time. Instead, we allow the second impulse instance~$t_{i+2}$ to occur prematurely if continuing on will reduce expected average inter-event time. More precisely, we consider the average between the current inter-event time and the expected inter-event time after an impulse if one were to be applied:
$$
\big((t-t_k)+\tau_\predict(h(x))\big)/2.
$$
To optimize the above quantity, we simply monitor the trigger condition:
\begin{equation}
    \Xi_\tau(x) = (1+\dot\tau_\predict)/2.
\end{equation}
The trigger makes sure that we reach the local maximum point before we apply controls. However, the optimized average may be below the current expected inter-event time $\tau_\predict(h(x(t_i)))$. To avoid this, the trigger condition~$\Xi_\tau$ will only be considered after $t_{i+1} +\tau_\predict(h(x(t_i)))$. Mathematically, our second trigger is given by:
\begin{align}
    t_{i+2} &= \min\{t_{i+2}^\text{safe}, t_{i+2}^\tau\},\label{trigger:int_time}\\
    t_{i+2}^\text{safe} &= \min\setdefb{t\geq t_{i+1}}{\Xi(x(t))\leq 0}, \nonumber\\
    t_{i+2}^\tau &=\min\setdefb{t\geq t_{i+1}+\tau_\predict(h(x(t_{i+1})))}{\Xi_\tau(x(t))\leq 0}.\nonumber
\end{align}
Because both triggers contain the monitoring of barrier condition, we can conclude the same safety guarantee. We state this formally as follows.
\begin{proposition}\longthmtitle{Event-triggered Impulsive Safety Maneuvers}
    Consider the impulsive control system~\eqref{sys:impulse} with jump time $\{t_i\}_{i\in\naturals}$ determined iteratively by switching trigger designs~\eqref{trigger:greedy} and~\eqref{trigger:int_time}. Under the same set of assumptions as in Proposition~\ref{prop:ET-impulse}, $x(t)\in\Cc$ for all time if $x_0\in\Cc$. That is, the set $\Cc$ is safe.~\hfill$\blacksquare$ 
\end{proposition}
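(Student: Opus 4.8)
The plan is to reduce the statement to the machinery already established in Proposition~\ref{prop:ET-impulse}, isolating the single new feature of the switched scheme: the second impulse time in~\eqref{trigger:int_time} is defined as a minimum that \emph{includes} the safety trigger $t_{i+2}^\text{safe}$. First I would record the structural inequality $t_{i+2}=\min\{t_{i+2}^\text{safe},t_{i+2}^\tau\}\le t_{i+2}^\text{safe}$. Since $t_{i+2}^\text{safe}=\min\setdefb{t\ge t_{i+1}}{\Xi(x(t))\le 0}$, the safety condition $\Xi(x(t))>0$ holds on $[t_{i+1},t_{i+2}^\text{safe})$, and because $t_{i+2}\le t_{i+2}^\text{safe}$ it persists on the actual flow interval $[t_{i+1},t_{i+2})$. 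In words: letting the $\tau$-trigger fire early can only shorten a flow interval, never extend it past the safety-critical instant, so the barrier condition~\eqref{eq:flow_barrier} (equivalently $\Xi\ge 0$) is never violated during flow.

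With this in hand, the safety conclusion follows essentially verbatim from the second half of the proof of Proposition~\ref{prop:ET-impulse}. On every flow interval---whether it ends through the greedy trigger~\eqref{trigger:greedy} or through~\eqref{trigger:int_time}---the barrier condition holds, while each jump leaves the state in $\Cc$ by the controller design requirement~\eqref{eq:jump_positive}. Propagating $h(x)\ge 0$ across the alternating flow and jump phases by the standard comparison argument yields $x(t)\in\Cc$ for all $t$ whenever $x_0\in\Cc$.

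What remains, and where I expect the real work to sit, is re-establishing a minimum inter-event time for the switched scheme, so that trajectories are in fact defined for all time. For a jump governed by the greedy trigger the bound $\tau=c/(L_\Xi(B+\bar d))$ of Proposition~\ref{prop:ET-impulse} applies unchanged, since the preceding jump enforces $\Xi\ge c$ through~\eqref{eq:control_jump_cond}. For a jump governed by~\eqref{trigger:int_time} I would split on which term attains the minimum: if $t_{i+2}=t_{i+2}^\text{safe}$ the same bound $\tau$ applies, whereas if $t_{i+2}=t_{i+2}^\tau$ the explicit construction forces $t_{i+2}^\tau\ge t_{i+1}+\tau_\predict(h(x(t_{i+1})))$, giving an inter-event time of at least $\tau_\predict(h(x(t_{i+1})))$. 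Hence in all cases $t_{i+2}-t_{i+1}\ge \min\{\tau,\tau_\predict(h(x(t_{i+1})))\}$.

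The main obstacle is that this lower bound yields a genuine MIET ruling out Zeno only if $\tau_\predict$ is bounded below by a positive constant on $\Cc$. I would close this gap via the observation that $\tau_\predict$ records inter-event times realized under the greedy trigger, which itself obeys the MIET $\tau$; consequently $\tau_\predict(h(x))\ge \tau$ for every $x\in\Cc$, and the overall inter-event time is bounded below by $\tau>0$. (Absent this interpretation, one would simply add $\inf_{x\in\Cc}\tau_\predict(h(x))>0$ as a hypothesis.) Once a positive MIET is secured, Zeno behavior is excluded, the trajectory is defined for all time, and the safety argument of the second paragraph goes through to completion.
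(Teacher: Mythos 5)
Your first two paragraphs coincide with the paper's own (essentially omitted) proof: the proposition is stated with no separate argument beyond the preceding remark that "both triggers contain the monitoring of the barrier condition." That is exactly your observation $t_{i+2}=\min\{t_{i+2}^{\text{safe}},t_{i+2}^{\tau}\}\le t_{i+2}^{\text{safe}}$, which guarantees $\Xi(x(t))>0$ on every flow interval, combined with \eqref{eq:jump_positive} at the jumps; safety then follows as in the second half of the proof of Proposition~\ref{prop:ET-impulse}.

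The problem is in your handling of Zeno behavior. First, note that the proposition—unlike Proposition~\ref{prop:ET-impulse}—deliberately does not claim a MIET, and in fact a uniform lower bound on \emph{every} gap need not exist, since $t_{i+2}-t_{i+1}$ can be as small as $\tau_\predict(h(x(t_{i+1})))$, about which the paper assumes nothing. Second, your proposed fix is not sound: $\tau_\predict$ is only a curve fit to sampled inter-event-time data, so nothing forces the fitted function to dominate $\tau$ pointwise, and adding $\inf_{x\in\Cc}\tau_\predict(h(x))>0$ as a hypothesis changes the statement being proved. Neither is needed. The scheme alternates between the two trigger designs, so every other jump is governed by the greedy trigger~\eqref{trigger:greedy}; since the controller enforces \eqref{eq:control_jump_cond} at \emph{every} jump, each greedy gap satisfies $t_{i+1}-t_i\ge \tau = c/\big(L_\Xi(B+\bar d)\big)$ by the argument of Proposition~\ref{prop:ET-impulse}. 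Hence $t_{i+2}-t_i\ge\tau$ for all $i$, the jump times diverge, and accumulation of jumps is excluded unconditionally—no positivity of $\tau_\predict$ is required (the maneuver gaps are nonnegative by construction, since the minimum in~\eqref{trigger:int_time} is taken over $t\ge t_{i+1}$). With completeness secured this way, your safety argument goes through and the proof closes without any added hypothesis.
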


We have proposed an alternative event-triggered scheme for maintaining safety in an impulsive control system. In the scheme, the trigger conditions switch between being greedy in maximizing immediate inter-event time and predicting one step ahead in term of maximizing the inter-event time. We note that the trigger scheme with only trigger design~\eqref{trigger:int_time} can also work in term of safety guarantee, but it is unclear whether doing so will improve in term of inter-event times.

\begin{remark}\longthmtitle{Inter-event Time Heuristic}
{\rm 
    For our result, we can claim that the average inter-event time of two consecutive flow periods would be higher than without maneuver. However, this does not guarantee the an overall increase in average inter-event time. This is because each trigger design creates a different trajectory, so it is impossible to make a guarantee on the overall average inter-event time.
~\hfill$\bullet$
}
\end{remark}

\begin{remark}\longthmtitle{Maneuver Behavior with Safety Promoting Controller Codesign}
{\rm
    Our trigger scheme takes an opportunistic approach in extending the inter-event time. To fully imitate safety maneuver behavior, the first impulsive control must actively 
    try to drive the system state to a position where the inter-event time may increase, e.g., safer location with higher value of barrier function~$h$. This would involve a codesign of the controller---designing the controller with the expectation of using our trigger scheme. For our satellite example, we assure that each impulse would promote safety in order to take full advantage of the trigger design. In our following simulation, we demonstrate the success in imitating maneuver behavior.~\hfill$\bullet$
}
\end{remark}

\subsection{Simulation Result}
We simulate our impulsive safety maneuver trigger scheme for the satellite safety problem explained earlier in the paper. In addition, we use the inter-event time data collected shown in Fig.~\ref{fig:fit} to fit a curve in order to estimate the function $\tau_\predict$. Fig.~\ref{fig:int_time} shows the results for the first 150 hours of the simulated orbital time. Safety is maintained as expected. In addition, the bottom figure shows the behavior of a safety maneuver. The trigger alternates between safety and finding the optimal location to trigger for inter-event time. Although there is no guarantee in an increase in inter-event time, we report that there are total of 215 trigger occurrences across the 2400 hours of orbital time, a reduction of 19.5 percent from the earlier simulation result.

\begin{figure}[thpb]
\label{fig:int_time}
\centering
\includegraphics[width=\linewidth,height=\textheight,keepaspectratio]{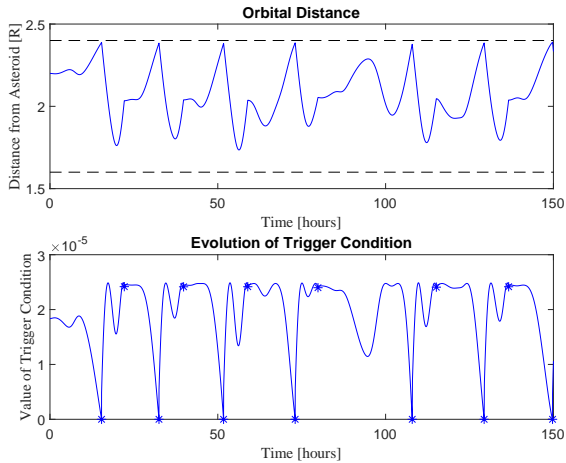}
\caption{Safety Maneuvers Simulation. [Top] Distance to the asteroid over 150 hours of satellite orbit. The plot shows the distance remains within the safe range at all time. [Bottom] Trigger condition over 150 hours of satellite orbit. Stars indicate the time at which the trigger occurs, showing the behavior of traditional spacecraft maneuvers. }
\end{figure}

\section{Event-Triggered Intermittent Safety Filter}
With the development of our event-triggered control for safety in impulsive systems, we can combine the different elements from our earlier results to develop the framework for intermittent safety filter. For intermittent systems, the dynamics is characterized by alternating periods where controller are on and off. We use this to describe when given safety filters are active and inactive.

We consider the application of safety filters with the intermittent dynamics given in~\eqref{sys:intermittent}. For our trigger framework, we use event-triggered control to determine when the filter needs to be back on in intermittent control system~\eqref{sys:intermittent}:
\begin{subequations}\label{trigger:filter_on}
\begin{align}
    t_{i+1}^\on &= \min\setdefB{t\geq t_{i}^\off}{\Xi_\on(x(t))\leq 0},\\
    \Xi_\on(x) &= \Lie_fh(x,k_\nom(x)) - \left\|\left. \frac{\partial h}{\partial x} \right|_{x}\right\| \bar d  +\alpha\big(h(x)\big).\label{eq:filter_off_cond}
\end{align}
\end{subequations}
This trigger relies on the same idea of monitoring the barrier condition and turning the filter back on when the condition gets violated. Indeed, in order to establish the MIET of the off duration, as we have studied in the impulsive control systems in Section~\ref{sec:ET-impulse}, we would require that $
\Xi_\on(x(t^\off_i))\geq c$ at time $t^\off_i$. To assure this is true, we use another trigger to determine when we can turn the filter off:
\begin{subequations}\label{trigger:filter_off}
\begin{align}
    t_{i}^\off &= \min\setdefB{t\geq t_{i}^\on}{\Xi_\off(x(t))\leq 0},\\
    \Xi_\off(x) &= \Xi_\on(x)-c.\label{eq:filter_off_cond}
\end{align}
\end{subequations}
The final key element in our framework is to guarantee that the above will occur. To this end, we will use the idea of increasing the value of barrier function~$h$ which will increases the value of the trigger condition~\eqref{eq:filter_off_cond}. Hence, we modify the constraint filter:
\begin{align}
k(x) = &\argmin_{u\in\real^m} \|u-k_\nom(x)\|^2\label{eq:filter_inc}\\
&\quad\text{s.t.}~~ \Lie_fh(x,u)- \left\|\left. \frac{\partial h}{\partial x} \right|_{x}\right\| \bar d\geq b,\nonumber
\end{align}
where $b>0$ is a positive constant. We will simply assume that the filter is feasible, and leave feasibility as a line of future research. In any case, even with the barrier function increasing, the trigger might still not occur because the value of $L_fh(x)$ may dominate $\alpha(h(x))$. Therefore, we assume $\alpha$ is large enough so that the nominal controller satisfy the barrier condition, at least for large value of $h$.

\begin{assumption}\longthmtitle{Nominal Safety}
\label{assump:nom_safe}
Given a nominal controller $k_\nom$, the function $\alpha$ is such that 
$$
\Lie_fh(x,k_\nom(x)) - \left\|\left. \frac{\partial h}{\partial x} \right|_{x}\right\| \bar d   \geq -\alpha\big(h(x)\big)+c
$$
for all $x\in\Cc$ such that $h(x)\geq\bar h$ for some positive $\bar h>0$.~\hfill$\bullet$
\end{assumption}
The assumption is related to the existence of a safety level (as described by $h$) where the nominal controller may operate without any filter. With this assumption, we assure that our safety promoting controller can drive the trajectories to such safe level, and therefore the off trigger will occur in finite time. The assumption in itself is not a strict one because a user usually gets to pick $\alpha$ and there always exists $\alpha$ large enough for the assumption to hold. Note however that the implication of choosing a large $\alpha$ lead to a less conservatism in safety because the trajectory is allowed to approach the boundary at a faster rate.

Now, we have all the elements for our intermittent safety filter framework. We are ready to give the following result.
\begin{theorem}\longthmtitle{Event-triggered Intermittent Safety Filter}
\label{thm:intermittent_filter}
Consider the intermittent nonlinear system~\eqref{sys:intermittent} with a nominal controller $k_\nom$ satisfying Assumption~\ref{assump:nom_safe} and a safety-filtered controller given by \eqref{eq:filter_inc}. Let the trigger designs~\eqref{trigger:filter_on} and~\eqref{trigger:filter_off} determine the time sequences $\{t_i^\on\}_{i\in\naturals}$ and  $\{t_i^\off\}_{i\in\naturals}$ iteratively.
Then there exists $t^\off_i$ for every $t^\on_i$.
In addition, under the same set of assumptions as in Proposition~\ref{prop:ET-impulse}, 
there exists a MIET for the off period, i.e., $\tau \leq t_{i+1}^\on - t_i^\off$ for all $i\in \naturals$. Consequently, $x(t)\in\Cc$ for all time if $x_0\in\Cc$. That is, the set $\Cc$ is safe. 
\end{theorem}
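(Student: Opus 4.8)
The plan is to establish the three assertions of the statement in order---(a) existence of an off-switch $t^\off_i$ for every on-switch $t^\on_i$, (b) the off-period MIET $\tau \le t^\on_{i+1} - t^\off_i$, and (c) forward invariance of $\Cc$---and then to tie them together with an induction over on/off cycles that also rules out Zeno behavior. The engine driving the whole argument is the modified filter~\eqref{eq:filter_inc}: during any on-period its constraint gives $\dot h = \Lie_fh(x,k(x)) + \tfrac{\partial h}{\partial x}d \ge \Lie_fh(x,k(x)) - \|\partial h/\partial x\|\bar d \ge b > 0$, so $h$ grows at least linearly while the filter is active. This is the quantitative expression of ``driving the system to safer states.''

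First, for existence of $t^\off_i$: on $[t^\on_i,t^\off_i)$ the growth bound yields $h(x(t)) \ge h(x(t^\on_i)) + b(t-t^\on_i)$, so $h$ reaches the threshold $\bar h$ of Assumption~\ref{assump:nom_safe} no later than $t^\on_i + (\bar h - h(x(t^\on_i)))/b$. Since $\bar h > 0$, at that instant $x \in \text{Int}(\Cc)$, so Assumption~\ref{assump:nom_safe} applies and gives $\Xi_\on(x) = \Lie_fh(x,k_\nom(x)) - \|\partial h/\partial x\|\bar d + \alpha(h(x)) \ge c$. Thus the off-trigger~\eqref{trigger:filter_off}---read so that it fires the moment the nominal barrier margin $\Xi_\on$ first attains the buffer $c$ from below, consistent with the stated requirement $\Xi_\on(x(t^\off_i)) \ge c$---must have activated by this finite time. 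Hence $t^\off_i$ exists, is finite, and by continuity $\Xi_\on(x(t^\off_i)) = c$.

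Second, the off-period MIET is a direct re-run of the estimate proving Proposition~\ref{prop:ET-impulse}, now with the flow field taken as $f(\cdot,k_\nom(\cdot))$ and the monitored function as $\Xi_\on$. Starting from $\Xi_\on(x(t^\off_i)) \ge c$ and using $B \ge \|f(x,k_\nom(x))\|$ together with the Lipschitz constant $L_{\Xi_\on}$, the same linear estimate gives $\Xi_\on(x(t)) \ge c - L_{\Xi_\on}(B+\bar d)(t-t^\off_i)$, so $\Xi_\on$ cannot reach $0$ before $t^\off_i + \tau$ with $\tau = c/(L_{\Xi_\on}(B+\bar d))$. This is the claimed MIET, and it is the only place where the regularity hypotheses imported from Proposition~\ref{prop:ET-impulse} are needed. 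Third, for safety I would induct over cycles with hypothesis $h(x(t^\on_i)) \ge 0$: during the on-period $h$ is increasing, so $h \ge 0$ is preserved and $x(t^\off_i) \in \Cc$; during the off-period the on-trigger~\eqref{trigger:filter_on} keeps $\Xi_\on(x(t)) > 0$, i.e. $\Lie_fh(x,k_\nom(x)) - \|\partial h/\partial x\|\bar d > -\alpha(h(x))$, hence $\dot h > -\alpha(h)$ and the standard comparison argument with class-$\Kc$ $\alpha$ (Nagumo) preserves $h \ge 0$, closing the induction. Finally, since every off-period has length $\ge \tau$, we get $t^\on_{i+1} \ge t^\on_i + \tau$, so the switching times diverge, the trajectory is defined for all $t$, and $x(t) \in \Cc$ throughout.

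The main obstacle is the interdependence between steps (a) and (c): invoking Assumption~\ref{assump:nom_safe} to produce $t^\off_i$ presupposes the trajectory is in $\Cc$, while the safety argument itself relies on the off-trigger having fired. The clean resolution is exactly the cycle-induction above, in which monotonicity of $h$ makes on-period safety automatic and independent of Assumption~\ref{assump:nom_safe}, and delivers $h \ge \bar h$; Assumption~\ref{assump:nom_safe} is then applied only at a point where $h > 0$ already holds. A secondary technical gap is well-posedness on on-periods, which use the filtered controller $k$ rather than $k_\nom$, so the flow bound $B$ of Proposition~\ref{prop:ET-impulse} does not directly apply; I would close this by noting that $h$ stays in the bounded band $[h(x(t^\on_i)),\bar h]$ until $t^\off_i$ and that $f(\cdot,k(\cdot))$ is locally bounded on the corresponding region of $\Cc$, so no finite-time escape occurs before the switch.
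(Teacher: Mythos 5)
Your proposal is correct and takes essentially the same route as the paper's own proof: existence of $t^\off_i$ from the bound $\dot h \ge b$ guaranteed by \eqref{eq:filter_inc} so that $\bar h$ is reached within $(\bar h - h(x(t^\on_i)))/b$ and Assumption~\ref{assump:nom_safe} then forces the off-trigger to fire, the MIET by re-running the Lipschitz estimate from Proposition~\ref{prop:ET-impulse} starting from $\Xi_\on \ge c$, and safety argued cycle-by-cycle (filter constraint on on-periods, trigger monitoring on off-periods). The only differences are that you make explicit details the paper leaves implicit---the cycle induction resolving the circularity with Assumption~\ref{assump:nom_safe}, the comparison argument on off-periods, the well-posedness of on-periods, and the intended sign of the off-trigger condition (the paper's $\Xi_\off = \Xi_\on - c$ with the fire-when-nonpositive convention would trigger immediately; your reading, firing when $\Xi_\on$ reaches $c$, is clearly what is meant)---which strengthens rather than alters the argument.
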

\begin{proof}
First, we prove the existence of $t_i^\off$. Due to the constraint in the filter~\eqref{eq:filter_inc}, we can deduce $\frac{dh}{dt} \geq b$
Therefore, $\bar h - h(x(t_i^\on))$ is reached in finite time $T \leq (\bar h - h(x(t_i^\on)))/b$. At which point, the trigger criterion~\eqref{trigger:filter_off} must already be satisfied.

The proof of MIET is as in the proof of Proposition~\ref{prop:ET-impulse}. With a MIET established, we can conclude that all maximal solutions are complete, i.e., exists for all time. Then, for time period $[t^\on_i,t^\off_{i})$, safety is guarantee due to the satisfaction of constraint in safety filter~\eqref{eq:filter_inc}. In addition, for time period $[t^\off_i,t^\on_{i+1})$, safety is guarantee due to the monitoring of the trigger condition~\eqref{eq:filter_off_cond}.  Thus, it can be determined iteratively that $x(t)\in\Cc$ at all time if $x_0\in\Cc$ using the barrier function~$h$, concluding the proof.
\end{proof}
Theorem \ref{thm:intermittent_filter} formalizes our event-triggered intermittent safety filter framework. We summarize how our framework maintain safety as follows. We no longer use the barrier condition to filter the nominal controller. Instead, the filter aims to promote safety by increasing the barrier function in order to maneuver into states where it is possible to turn the filter off. We only use barrier condition to monitor safety and when to filter. The trigger framework effectively add hysteresis to the system, allowing for a switching period between filtering and not filtering.

\section{Conclusion}
In this paper, we have proposed various trigger designs for the purpose of reducing control effort for safety objectives. We have developed trigger schemes for safeguarding controllers in impulsive control systems and for safety filters in nonlinear systems. One particular interesting idea explored is safety maneuver which switches between actively using control effort for safety and only monitoring safety. Our future work includes the application of our event-triggered intermittent safety filter on a robotic system with the goal of acoomplishing simultaneously a nominal task and  collision avoidance. In addition, we will analyze of the tradeoff between safety maneuver and progress towards nominal objective, particularly in the context the optimization-based controller with Lyapunov and barrier condition as constraints. Our hope is that safety maneuvers will allow us to make guarantee for satisfaction of nominal objectives.

\vspace{0.4cm}
\textbf{Acknowledgement.} The authors would like to thank JPL for their feedback and discussion on the control application to satellite problems. We especially thank Saptarshi Bandyopadhyay in particular for the suggestion of imitating orbit transfers and for providing the eighth-order harmonics gravity model used in our simulation results.

\iffalse
\addtolength{\textheight}{-12cm} % This command serves to balance the column lengths
% on the last page of the document manually. It shortens
% the textheight of the last page by a suitable amount.
% This command does not take effect until the next page
% so it should come on the page before the last. Make
% sure that you do not shorten the textheight too much.

\fi
\bibliography{alias,PO,Main-Pio}
\bibliographystyle{ieeetr}
\end{document}